\theoremstyle{plain}
\newtheorem{thm}{Theorem}[section]
\newtheorem{theorem}[thm]{Theorem}
\newtheorem{proposition}[thm]{Proposition}
\theoremstyle{definition}
\newtheorem{de}[thm]{Definition}
\newtheorem{example}[thm]{Example}
\newtheorem{algorithm}[thm]{Algorithm}
\newcommand{\nie}[1]{{\color{blue}{#1}\color{black}{}}}
\numberwithin{equation}{section}
\newcommand{\N}{\mathbb{N}}
\begin{document}

\title{Retracts of degenerate solutions of \\the Yang-Baxter equation}

\author{P\v remysl Jedli\v cka}
\author{Agata Pilitowska}

\address{(P.J.) Department of Mathematics and Physics, Faculty of Engineering, Czech University of Life Sciences, Kam\'yck\'a 129, 16521 Praha 6, Czech Republic}
\address{(A.P.) Faculty of Mathematics and Information Science, Warsaw University of Technology, Koszy\-kowa 75, 00-662 Warsaw, Poland}

\email{(P.J.) jedlickap@tf.czu.cz}
\email{(A.P.) agata.pilitowska@pw.edu.pl}

\keywords{Yang-Baxter equation, retract, multipermutation solution, congruence relations}

\subjclass[2020]{Primary: 16T25. 
Secondary: 06B10, 08A30.
}

\date{\today}

\begin{abstract}
Most of the set-theoretical solutions of the Yang-Baxter equation studied in the past years were non-degenerate multipermutation solutions. For degenerate solutions, a correct definition of multipermutation solutions has not been established so far. We fill here this gap providing a definition
of multipermutation solutions that generalizes the one for non-degenerate solutions and we find an axiomatic description of this class by a set of equations that generalizes the equations describing non-degenerate multipermutation solutions.

It turned out that the results do not need all the properties of solutions of the Yang-Baxter equation and therefore we prove them in a general universal-algebraic setting.
\end{abstract}

\maketitle

\section{Introduction}

The Yang-Baxter equation is a fundamental equation occurring in mathematical physics. It appears, for example, in integrable models in statistical mechanics, quantum field theory or Hopf algebras~(see e.g. \cite{Jimbo, K}). Searching for its solutions has been absorbing researchers for many years.

Let us recall that, for a vector space $V$, a {\em solution of the Yang--Baxter equation} is a linear mapping $r:V\otimes V\to V\otimes V$ 
 such that
\begin{align*}
(id\otimes r) (r\otimes id) (id\otimes r)=(r\otimes id) (id\otimes r) (r\otimes id).
\end{align*}

Description of all possible solutions seems to be extremely difficult and therefore
there were some simplifications introduced by Drinfeld in \cite{Dr90}.
Let  $X$ be a basis of the space $V$ and let $\sigma:X^2\to X$ and $\tau: X^2\to X$ be two mappings. We say that $(X,\sigma,\tau)$ is a {\em set-theoretic solution of the Yang--Baxter equation} if
the mapping 
$$x\otimes y \mapsto \sigma(x,y)\otimes \tau(x,y)$$ extends to a solution of the Yang--Baxter
equation. It means that $r\colon X^2\to X^2$, where $r=(\sigma,\tau)$,  satisfies the \emph{braid relation}:
\begin{equation}\label{eq:braid}
(id\times r)(r\times id)(id\times r)=(r\times id)(id\times r)(r\times id).
\end{equation}

A solution is called {\em left non-degenerate} if the mappings $\sigma_x=\sigma(x,\_)$  are bijections
	and {\em right non-degenerate} if the mappings $\tau_y=\tau(\_\,,y)$
		are bijections,
for all $x,y\in X$. A solution is {\em non-degenerate} if it is left and right non-degenerate.
A~solution is called {\em bijective} if $r$ is a~bijection of~$X^2$. In particular, it is {\em involutive} if $r^2=\mathrm{id}_{X^2}$. All solutions $(X,\sigma,\tau)$, we study in this paper, are set-theoretic, so we will call them simply \emph{solutions}. The set $X$ can be of arbitrary cardinality.

If $(X, \sigma,\tau)$ is a solution then directly, by the braid relation, we obtain, for $x,y,z\in X$:
\begin{align}
\sigma_x\sigma_y&=\sigma_{\sigma_x(y)}\sigma_{\tau_y(x)} \label{birack:1}\\
\tau_{\sigma_{\tau_y(x)}(z)}\sigma_x(y)&=\sigma_{\tau_{\sigma_y(z)}(x)}\tau_{z}(y) \label{birack:2}\\
\tau_x\tau_y&=\tau_{\tau_x(y)}\tau_{\sigma_y(x)} \label{birack:3}
\end{align}                                         

Some of the results presented here actually do not depend on these axioms and hence we shall be working in a broader universal algebraic setting.
Let $(X,\Gamma)$ be an algebra, where $\Gamma$ is a set of binary operations. An equivalence relation $\mathord{\asymp}\subseteq X\times X$ such that for $x_1,x_2,y_1,y_2\in X$ 
\begin{align}\label{congr}
&x_1\asymp x_2\;\; {\rm and} \;\; y_1\asymp y_2\quad \Rightarrow\quad \gamma(x_1,y_1)\asymp \gamma(x_2,y_2)\quad \text{for all }\gamma\in\Gamma
\end{align}
is called a \emph{congruence} of the algebra $(X,\Gamma)$
A congruence induces a quotient solution $(X^{\asymp},\Gamma)$ on its classes with $\gamma(x^{\asymp},y^{\asymp})=\gamma(x,y)^{\asymp}$, for $\gamma\in\Gamma$, $x^{\asymp},y^{\asymp}\in X^{\asymp}$  and $x\in x^{\asymp},\; y\in y^{\asymp}$.

In \cite{ESS} Etingof, Schedler and Soloviev introduced, for each non-degenerate involutive solution $(X,\sigma,\tau)$, the equivalence relation $\sim$ on the set $X$: for each $x,y\in X$
\begin{align}\label{rel:sim}
x\sim y\quad \Leftrightarrow\quad \tau_x=\tau_y
\end{align}
and they showed that 
$\sim$ is a congruence of the solution and the factor solution is again non-degenerate.
In the case of a non-degenerate non-involutive solution $(X,\sigma,\tau)$,
such an equivalence relation $\sim$
does not need to 
be a congruence. It is so if the solution is right distributive (see \cite[Theorem 3.4]{JPZ20b}) 
but, without right distributivity, a suitable definition of a congruence turns out to be
\begin{equation}\label{eq:retraction}
 x\sim y 
 \quad \Leftrightarrow \quad {\sigma_x=\sigma_y} \wedge {\tau_x=\tau_y}.
\end{equation}

Lebed and Vendramin showed in \cite{LV} that  the relation $\sim$ is a congruence of injective solutions. 
In \cite{JPZ19} the authors together with Zamojska-Dzienio generalized the result for any non-degenerate solution $(X,\sigma,\tau)$.  A substantially shorter proof has appeared in \cite{CJKAV} for non-degenerate bijective solutions.

The quotient solution $\mathrm{Ret}(X):=(X^{\sim},\sigma,\tau)$ is called the \emph{retraction} solution of $(X,\sigma,\tau)$. 
If the equivalence $\sim$ is identity, 
we call the solution {\em irretractable.} Otherwise, one can define the \emph{iterated retraction} in the following way: ${\rm Ret}^0(X):=(X,\sigma,\tau)$ and
	${\rm Ret}^k(X):={\rm Ret}({\rm Ret}^{k-1}(X))$, for any natural number $k>1$. 
If there exists an integer~$k$ such that $\mathrm{Ret}^k(X)$ has one element only then we say that
$(X,\sigma,\tau)$ has {\em multipermutation level $k$}. 
Gateva-Ivanova studied such solutions in many papers. In particular, together with Cameron they gave in \cite{GIC12} an equational characterization of square-free non-degerate involutive solutions of multipermutation level at most $k$. Recently, we extended  in \cite{JP25} their result and gave an equational characterization of all non-degenerate multipermutation solutions.

In recent years, degenerate solutions started to attract researchers (see e.g. \cite{CMMS} -- \cite{CV},  
\cite{R19}), 
and this, naturally, raised the question whether
some generalization of the multipermutation level can be defined as well.
Actually, the constructions that appear in the literature are either irretractable, or, by definition, they inherit some iterated structure from the associative algebras they are derived from. Therefore the question of
retracts of these specific solutions can be answered immediately.
However, in the full generality, there exist solutions where~$\sim$
is not a~congruence, as we show in Example \ref{ex:bad}.

In the paper we propose a modified definition which generalizes the definition of~$\sim$ for non-degenerate solutions. Additionally, we prove that there exists an axiomatic description of multipermutation level, similar to the axioms we found for non-degenerate solutions~\cite{JP25}.
It turns out that the results do not need all the axioms of solutions of the Yang-Baxter equation, and therefore we prove them in a general universal-algebraic setting. The results actually follow quite straightforwardly from the definitions, more interesting than proofs are therefore several examples that justify our choice of these specific definitions over others.

The paper is organized as follows. 
In Section 2 we introduce a general description of the notion of a retract of any algebra with a set of binary operations and provide examples of solutions which justify the described definitions. In Section 3 we introduce an equational basis for multipermutation level. We provide an example showing that a simplified version, that works for non-degenerate solutions, does not suffice here, except for solutions of the multipermutation level 2.
Some of the examples were found using Mace4~\cite{Prover}.

\section{Retract equivalence}

In this section we present a general description of the notion of a retract and we will present an algorithm how to find the retract.
Throughout the section
we suppose that we have an algebra $(X,\Gamma)$ where $\Gamma$ is a set of binary operations.
For any $a,b\in X$ and each binary operation $\gamma\in \Gamma$, the image of a pair $(a,b)$ 
will be denoted as $\gamma_a(b)$. Hence we will consider $\gamma$ as a
mapping $X\to X^X$ rather than $X\times X\to X$ (the concept known as ``currying'' in a cartesian closed category).

The notion of a congruence has been already introduced before but here we shall pay more attention on formal details.
Let us recall that for $(X,\Gamma)$  an equivalence relation $\mathord{\asymp}\subseteq X\times X$ is \emph{compatible} with a binary operation $\gamma\in X^X$ if for $x_1,x_2,y_1,y_2\in X$,
\begin{equation}
	\gamma_{x_1}(y_1)\; \asymp \; \gamma_{x_2}(y_2),
\end{equation}
whenever $x_1\asymp x_2$ and $y_1\asymp y_2$.
This condition ensures that there exists a well defined operation
$\gamma/_{\asymp}$ on the equivalence classes of~$\asymp$.
If an equivalence is compatible with all the operations $\gamma\in \Gamma$ then it is called a {\em congruence} of $(X,\Gamma)$.

\begin{de}
Let $(X,\Gamma)$ be an algebra with a set $\Gamma$ of binary operations. We define the equivalence $\sim$ on~$X$ as
\begin{equation}
    x\sim y \qquad\text{ if and only if}\qquad
    \gamma_x=\gamma_y,\text{ for all }\gamma\in\Gamma.
\end{equation}
Let $\approx$ be the biggest congruence of~$(X,\Gamma)$ smaller or equal than $\sim$. 

We define the {\em retract} $\mathrm{Ret}(X)$ as the factor $(X,\Gamma)/\mathord{\approx}$ and the {\em iterated retraction} $\mathrm{Ret}^{i+1}(X)$ as $\mathrm{Ret}^{i}(X)/\mathord{\approx}$, for $i\in\mathbb{N}$. If there exists an integer~$k$ such that $\mathrm{Ret}^k(X)$ has one element only then we will say that
the algebra $(X,\Gamma)$ has {\em multipermutation level $k$}. We will say that an algebra $(X,\Gamma)$ is \emph{irretractable} if ${\rm Ret}(X)=(X,\Gamma)$.
\end{de}

It is well known that the ordered set of all congruences of an algebraic structure forms a sublattice of the lattice of all equivalences and this lattice is complete (see e.g. \cite[Theorem 5.3.]{BS}). Hence the congruence~$\approx$ always exists.

There is a standard algorithm for finding the biggest congruence under an equivalence. It is used, for instance, for finding the minimal finite automaton equivalent to a given automaton (see e.g. \cite{HMU}).

\begin{algorithm} 
Let $(X,\Gamma)$ be an algebra with a set $\Gamma$ of binary operations and let $a,b\in X$. We inductively construct relations $\sim_i$, for any $i\in\N\cup\{\infty\}$:
\begin{itemize}
    \item Initial step: $\mathord{\sim_0}\nie{:}=\mathord{\sim}$.
    \item Inductive step: $a\sim_{i+1} b$ if and only if
    $a\sim_i b$ and $\gamma_x(a)\sim_i\gamma_x(b)$, for all $x\in X$, $\gamma\in\Gamma$ and $i\in\N$.
    \item Limit step: $a\sim_\infty b$ if and only if $a\sim_i b$, for all $i\in\N$.
\end{itemize}
\end{algorithm}

\begin{proposition}
Let $(X,\Gamma)$ be an algebra. Then
\begin{itemize}
    \item Every relation $\sim_i$, for any $i\in \N\cup\{\infty\}$, is an equivalence on $X$.
\item The relation $\sim_\infty$ is a congruence of $(X,\Gamma)$.
\item The relations $\sim_\infty$ and $\approx$ are equal.
\end{itemize}
\end{proposition}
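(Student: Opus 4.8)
The plan is to establish the three bullet points in order, since each relies on the previous one.

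First I would prove by induction on $i$ that every $\sim_i$ is an equivalence. The base case $\sim_0=\mathord\sim$ is clear, as it is defined by the equality $\gamma_x=\gamma_y$ of maps, which is manifestly reflexive, symmetric and transitive. For the inductive step, assuming $\sim_i$ is an equivalence, the relation $\sim_{i+1}$ is the conjunction of $\sim_i$ with the family of conditions ``$\gamma_x(a)\sim_i\gamma_x(b)$ for all $x\in X$, $\gamma\in\Gamma$'', and each such condition inherits reflexivity, symmetry and transitivity from $\sim_i$; their conjunction therefore remains an equivalence. I would also record here the immediate observation that $\mathord{\sim_{i+1}}\subseteq\mathord{\sim_i}$ (since $a\sim_{i+1}b$ requires $a\sim_i b$), so the $\sim_i$ form a descending chain and $\mathord{\sim_\infty}=\bigcap_{i\in\N}\mathord{\sim_i}$ is an intersection of equivalences, hence again an equivalence.

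Second, for the congruence property of $\sim_\infty$, the key observation is that the base relation already controls the left argument completely: if $x_1\sim_\infty x_2$ then in particular $x_1\sim_0 x_2$, i.e. $\gamma_{x_1}=\gamma_{x_2}$ as maps, so $\gamma_{x_1}(y_1)=\gamma_{x_2}(y_1)$ literally. It then remains to handle the right argument. Unwinding the definition, $y_1\sim_\infty y_2$ means $y_1\sim_i y_2$ for every $i$, hence $y_1\sim_{i+1}y_2$ for every $i$, which yields $\gamma_{x_2}(y_1)\sim_i\gamma_{x_2}(y_2)$ for every $i$, that is $\gamma_{x_2}(y_1)\sim_\infty\gamma_{x_2}(y_2)$. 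Combining this with the equality above and using transitivity of $\sim_\infty$ gives $\gamma_{x_1}(y_1)\sim_\infty\gamma_{x_2}(y_2)$, which is exactly compatibility of $\sim_\infty$ with $\gamma$.

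Third, to identify $\sim_\infty$ with $\approx$ I would prove the two inclusions separately. The inclusion $\mathord{\sim_\infty}\subseteq\mathord{\approx}$ is immediate from the previous step together with the maximality of $\approx$: we have just shown $\sim_\infty$ is a congruence contained in $\sim_0=\mathord\sim$, and $\approx$ is by definition the largest such congruence. For the reverse inclusion I would show $\mathord{\approx}\subseteq\mathord{\sim_i}$ for all $i$ by induction. The case $i=0$ holds because $\mathord{\approx}\subseteq\mathord\sim$ by definition of $\approx$. For the step, suppose $\mathord{\approx}\subseteq\mathord{\sim_i}$ and take $a\approx b$; then $a\sim_i b$, and since $\approx$ is a congruence we obtain $\gamma_x(a)\approx\gamma_x(b)$ for all $x,\gamma$ (using $x\approx x$), whence $\gamma_x(a)\sim_i\gamma_x(b)$ by the inductive hypothesis; together these say $a\sim_{i+1}b$. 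Intersecting over $i$ gives $\mathord{\approx}\subseteq\mathord{\sim_\infty}$, and the two inclusions finish the proof.

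I do not expect a genuine obstacle here; the argument is essentially bookkeeping. The one point worth isolating -- and the reason it is so short -- is the remark in the second step that membership in $\sim$ forces literal equality of the left translations $\gamma_x$, so the algorithm only ever needs to refine the \emph{right} argument. This is precisely what makes the one-sided inductive step of the algorithm sufficient to produce a two-sided congruence.
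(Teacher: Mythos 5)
Your proof is correct and follows essentially the same route as the paper: both hinge on the observation that $x\sim_\infty y$ already forces $\gamma_x=\gamma_y$ (so only the right argument needs the inductive refinement), and both establish $\mathord{\approx}\subseteq\mathord{\sim_\infty}$ via the congruence property of $\approx$ level by level, with the reverse inclusion from maximality. The only difference is presentational — you argue directly by induction where the paper uses contradiction and a minimal counterexample index — which is a cosmetic, not a substantive, distinction.
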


\begin{proof}
    It is evident that, for any $i\in \N\cup\{\infty\}$,  every relation $\sim_i$ is reflexive, symmetric and transitive. 
    We now prove that the relation $\sim_\infty$ is a congruence of $(X,\Gamma)$. 
    Suppose, by contradiction,
    that there exist $a,b,x,y\in X$ and $\gamma\in\Gamma$ such that
    $a\sim_\infty b$, $x\sim_\infty y$ and $\gamma_x(a)\not\sim_\infty\gamma_y(b)$.
    Then there exists $i\in\N$ such that $\gamma_x(a)\not\sim_i\gamma_y(b)$
    and hence $\gamma_x(a)\not\sim_i\gamma_x(b)$. Therefore, it forces 
    $a\not\sim_{i+1} b$, a contradiction.

    Further assume, 
    by contradiction, $\mathord{\sim_\infty}\not\supseteq \mathord{\approx}$. Then there exist $a,b\in X$ such that $a\approx b$ and $a\not\sim_i b$, for some $i\in\N$. Suppose that $i$ is the minimal number that such~$a$ and $b$ exist. Clearly $i>0$. By minimality of~$i$, $a\sim_{i-1} b$ and there exists $\gamma\in\Gamma$ such that $\gamma_x(a)\not\sim_{i-1}\gamma_x(b)$, for some~$x\in X$. Since~$\approx$ is a congruence, we have $\gamma_x(a)\approx\gamma_x(b)$, a contradiction with the minimality of~$i$. Hence $\mathord{\sim_\infty}\supseteq\mathord{\approx}$. The reciproque inclusion is obtained from the maximality of~$\approx$.
\end{proof}

Observe that we can treat a solution $(X,\sigma,\tau)$ as an algebra with two binary operations. Using ``currying'' notation, we consider the operations of this algebra as $\sigma_x:=\sigma(x,\_)$ and $\tau_y:=\tau(\_,y)$, for any $x,y\in X$. Moreover, we can consider any non-degenerate solution as an algebra with four binary operations: $\sigma_x$, $\sigma_x^{-1}$, $\tau_y$ and $\tau_y^{-1}$. In this case, 
we know that already $\sim$ is a congruence of such an algebra and hence $\mathord{\approx}=\mathord{\sim}$. 
Therefore, to present examples of solutions where these two relations differ, we have to
display degenerate solutions. 

\begin{example}\label{ex:bad}
    Consider the following degenerate solution $(X,\sigma,\tau)$ with binary operations defined as follows:
    \[
    \begin{array}{r|ccccc}
    \sigma & a & b & c & d & e\\
    \hline
    a & a & e & d & c & a\\
    b & a & a & d & c & a\\
    c & a & e & d & c & a\\
    d & a & a & d & c & a\\
    e & a & e & d & c & a
    \end{array},\qquad
    \begin{array}{r|ccccc}
    \tau & a & b & c & d & e\\
    \hline
    a & a & a & a & a & a\\
    b & b & b & b & b & b\\
    c & c & c & c & c & c\\
    d & d & d & d & d & d\\
    e & e & e & e & e & e
    \end{array}.
    \]
Then, for $x,y\in X$, $x\sim y$ if and only if $\sigma_x=\sigma_y$, that means the corresponfing rows of the left table are equal, and $\tau_x=\tau_y$, that means the corresponding columns of the right table are equal.
Clearly, there are two classes of the equivalence~$\sim_0$. Let us denote them $I=\{a,c,e\}$ and $II=\{b,d\}$,
and let us write the table of the ``factor solution'' from the point of view of corresponding elements:
\[
\begin{NiceArray}{cc|ccc|cc}
    & \sigma & a & c & e & b & d\\
    \hline  
    \Block{3-1}{I} & a & I & II & I & I & I \\
    & c & I & II & I & I & I\\
    & e & I & II & I & I & I\\
    \hline
    \Block{2-1}{II} & b & I & II & I & I & I \\
    & d & I & II & I & I & I
\end{NiceArray},\qquad
\begin{NiceArray}{cc|ccc|cc}
    & \tau & a & c & e & b & d\\
    \hline  
    \Block{3-1}{I} & a & I & I & I & I & I \\
    & b & I & I & I & I & I\\
    & c & I & I & I & I & I\\
    \hline
    \Block{2-1}{II} & b & II & II & II & II & II \\
    & d & II & II & II & II & II
\end{NiceArray}.
\]
The equivalence $\sim_0$ would be a congruence if, in every block, the colmuns of $\sigma$ and the rows of $\tau$ were homogeneous. Since it is not so, we see that
we have to split the class~$I$ into two classes. Let us denote them $III=\{a,e\}$ and $IV=\{c\}$. Now  the relation $\sim_1$ has three equivalence classes:
\[
\begin{NiceArray}{cc|cc|c|cc}
    & \sigma & a & e & c & b & d\\
    \hline  
    \Block{2-1}{III} & a & III & III & II & III & IV \\
    & e & III & III & II & III & IV\\
    \hline
    IV & c & III & III & II & III & IV\\
    \hline
    \Block{2-1}{II} & b & III & III & II & III & IV \\
    & d & III & III & II & III & IV
\end{NiceArray},\qquad
\begin{NiceArray}{cc|cc|c|cc}
    & \tau & a & e & c & b & d\\
    \hline  
    \Block{2-1}{III} & a & III & III & III & III & III \\
    & e & III & III & III & III & III\\
    \hline
    IV & c & IV & IV & IV & IV & IV\\
    \hline
    \Block{2-1}{II} & b & II & II & II & II & II \\
    & d & II & II & II & II & II
\end{NiceArray}.
\]
Again, the relation~$\sim_1$ is still not a congruence ($b\sim_1 d$ but $\sigma_a(b)\not\sim_1 \sigma_a(d)$)
and we have to split the class $II$ into $V=\{b\}$ and $VI=\{d\}$.
\[
\begin{NiceArray}{cc|cc|c|c|c}
    & \sigma & a & e & c & b & d\\
    \hline  
    \Block{2-1}{III} & a & III & III & VI & III & IV \\
    & e & III & III & VI & III & IV\\
    \hline
    IV & c & III & III & VI & III & IV\\
    \hline
    V & b & III & III & VI & III & IV \\
    \hline
    VI & d & III & III & VI & III & IV
\end{NiceArray},\qquad
\begin{NiceArray}{cc|cc|c|c|c}
    & \tau & a & e & c & b & d\\
    \hline  
    \Block{2-1}{III} & a & III & III & III & III & III \\
    & e & III & III & III & III & III\\
    \hline
    IV & c & IV & IV & IV & IV & IV\\
    \hline
    V & b & V & V & V & V & V \\
    \hline
    VI & d & VI & VI & VI & VI & VI
\end{NiceArray}.
\]
We see that the process stops at this stage, that means $\mathord{\sim_2}=\mathord{\sim_3}=\cdots=\mathord{\sim_\infty}=\mathord{\approx}$. The retract $\mathrm{Ret}(X)$ has four elements and the second retract $\mathrm{Ret}^2(X)$ has only one element. Hence, the solution $(X,\sigma,\tau)$ has multipermutation level $2$.
\end{example}

\begin{example}
    Let $Y=\mathbb{N}\cup (\infty-\mathbb{N})=\{0,1,2,3,\ldots\}\cup\{\ldots ,\infty-2,\infty-1,\infty\}$. This ordered set is a chain with the smallest element $0$ and the greatest element $\infty$. Let $X=Y\cup\{\overline{\infty-1}\}$ where $\overline{\infty-1}$ is a twin copy of the element $\infty-1$. Let us now define two binary mappings $\sigma$ and $\tau$ on~$X$ as follows:
    \begin{align*}
    \sigma_{x}(y)&=0 \qquad\text{for all }x,y\in X,\\
    \tau_y(x)&=\begin{cases}
        \overline{\infty-1} &\text{if }x=\infty\text{ and }y=0,\\
        \infty-2 & \text{if }x=\overline{\infty-1},\\
        \max(0,x-1) & \text{otherwise}.\\        
    \end{cases}
    \end{align*}
    It is easy to check that, for all $x,y,z\in X$,
    \[\sigma_x\sigma_y(z)=\sigma_{\sigma_x(y)}\sigma_{\tau_y(x)}(z)=\tau_{\sigma_{\tau_y(x)}(z)}\sigma_x(y)=\sigma_{\tau_{\sigma_y(z)}(z)}\tau_z(y)=0\]
    and
    \[\tau_x\tau_y(z)=\tau_{\tau_x(y)}\tau_{\sigma_y(x)}(z)=\max(0,z-2).\]
    Hence we have a degenerate solution. Now, an easy inductive argument shows that, for any $i\in\mathbb{N}$,
    \[ x\sim_i y \quad\text{ if and only if }\quad
    x=y \text{ or } (x>i\text{ and }y>i).\]
    This shows that 
    the algorithm does not stabilize in any finite step. In the limit step we get
    \[ x\approx y \quad\text{ if and only if }\quad
    x=y \text{ or } (x\notin\mathbb{N}\text{ and }y\notin\mathbb{N}).\]
\end{example}

\section{Axioms}
In \cite[Theorem 5.15]{GIC12} Gateva-Ivanova characterized non-degenerate involutive multipermutation solutions by 
one equation and to describe the equation, she introduced an expression called \emph{a tower of actions} \cite[Definition 5.9]{GIC12}. However, to obtain in \cite{JP25} an equational characterization of non-degenerate non-involutive  multipermutation solutions we needed much more equations of this type. Which is why we decided to define some iterative terms that we shall generalize here.

Let $(X,\Gamma)$ be an algebra with a set $\Gamma$ of binary operations.
Let $\Sigma=\{\gamma_x\mid \gamma\in\Gamma, x\in X\}$
and let $\Sigma^*$ be the monoid of all words over~$\Sigma$.
Let us define, for each $i\in \mathbb{N}$, for all elements $x,z_1,\ldots,z_i\in X$,
for all mappings $\gamma^{(1)},\gamma^{(2)},\ldots,\gamma^{(i-1)},\gamma^{(i)}\in \Gamma$,
and for all words $\mathfrak{w}_1,\ldots,\mathfrak{w}_k\in \Sigma^*$
\begin{align*}
&\Omega_0(x):=x,\\
&\Omega_1(\mathfrak{w}_1\gamma^{(1)},x,z_1):=\mathfrak{w}_1\gamma^{(1)}_x(z_1), \\
&\vdots\\
&\Omega_i(\mathfrak{w}_i\gamma^{(i)},\ldots,\mathfrak{w}_1\gamma^{(1)},x,z_1,\ldots,z_{i-1},z_i):=\mathfrak{w}_i\gamma^{(i)}_{\Omega_{i-1}(\mathfrak{w}_{i-1}\gamma^{(i-1)},\ldots,\mathfrak{w}_1\gamma^{(1)},x,z_1,\ldots,z_{i-1})}(z_i).
\end{align*}

\begin{de}
    An algebra $(X,\Gamma)$ is called $k$-permutational, if, for all mappings $\gamma^{(1)},\ldots,\gamma^{(k)}\in \Gamma$, all elements $x_1,\ldots,x_k, y\nie{, z}\in X$, and all words $\mathfrak{w}_1,\ldots,\mathfrak{w}_k\in\Sigma^*$
    \begin{equation}\label{eq:permut}
        \Omega_k(\mathfrak{w}_k\gamma^{(k)},\ldots,\mathfrak{w}_1\gamma^{(1)},y,x_1,\ldots,x_k)=
        \Omega_k(\mathfrak{w}_k\gamma^{(k)},\ldots,
        \mathfrak{w}_1\gamma^{(1)},z,x_1,\ldots,x_k).
    \end{equation}
\end{de}

Note that in \cite{JP25} a different definition was given,
not using the words $\mathfrak{w}_i$. 
This definition seems to work well when decribing algebras where
every mapping from~$\Sigma$ is a bijection. 
However,
in the general case, without the words $\mathfrak{w}_i$ we may obtain irretractable algebras, as we shall see on the following example.

\begin{example}\label{exm:3-perm}
    Let $n>2$, let $X=\{0,1,\ldots,n\}$ and let
    \[
    \sigma_{x}(y)=\begin{cases} \min\{n,x+1\} & \text{if } y=0,\\
    1 & \text{if }y\neq 0,
    \end{cases}
    \qquad
    \tau_y(x)=\begin{cases}
        2 &\text{if }x=n\text{ and }y=0,\\
        1 &\text{if }x\neq n\text{ or }y\neq 0.
    \end{cases}
    \]
Clearly, for all $x,y,z\in X$,
\[\sigma_x\sigma_y(z)=\sigma_{\sigma_x(y)}\sigma_{\tau_y(x)}(z)=\tau_{\sigma_{\tau_y(x)}(z)}\sigma_x(y)=\sigma_{\tau_{\sigma_y(z)}(z)}\tau_z(y)=
\tau_x\tau_y(z)=\tau_{\tau_x(y)}\tau_{\sigma_y(x)}(z)=1.\]
Hence $(X,\sigma,\tau)$ 
is a solution which is obviously degenerate. Moreover, $n\sim (n-1)$ is the only pair of different elements in the relation~$\sim$. But $\tau_0(n)=2\neq 1=\tau_0(n-1)$ and hence the 
solution is irretractable. Furthermore
\[
\Omega_{n-1}(\sigma,\ldots,\sigma,0,0,\ldots,0)=n-1\neq n=\Omega_{n-1}(\sigma,\ldots,\sigma,1,0,\ldots,0)
\]
and the solution is not $n-1$-permutational.
The reader can, by a case-by-case analysis, prove that
\[
        \Omega_n(\gamma^{(n)},\ldots,\gamma^{(1)},y,x_1,\ldots,x_n)=
        \Omega_n(\gamma^{(n)},\ldots,
        \gamma^{(1)},z,x_1,\ldots,x_n),
    \]
for all $\gamma^{(1)},\ldots,\gamma^{(n)}\in\Gamma=\{\sigma,\tau\}$
and $x_1,\ldots,x_n,y,z\in X$.
However, the solution is not $n$-permutational,
according to the definition \eqref{eq:permut} given in this paper, because
\[\Omega_n(\sigma,\tau_0\sigma,\sigma,\ldots,\sigma,0,0,\ldots,0)=2\neq 3=\Omega_n(\sigma,\tau_0\sigma,\sigma,\ldots,\sigma,1,0,\ldots,0).\]
\end{example}

In the case of non-degenerate solutions, the authors showed in~\cite{JP25} that there is a strong connection between $k$-permutational solutions (understood according to the definition for non-degenerate solutions) and the $k$ retract with respect to the relation $\sim$.  A similar relationship holds for algebras with binary operations.
\begin{theorem}\label{thm:main}
    An algebra $(X,\Gamma)$, with a set $\Gamma$ of binary operations, is $k$-permutational if and only if it has multipermutation level at most~$k$.
\end{theorem}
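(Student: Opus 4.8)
The plan is to prove the equivalence by reducing it, through a single inductive step, to the trivial case of a one-element algebra. The engine is the claim that, for every $k\ge 1$, the algebra $(X,\Gamma)$ is $k$-permutational if and only if its retract $\mathrm{Ret}(X)=(X,\Gamma)/\mathord{\approx}$ is $(k-1)$-permutational. Granting this, I would iterate: $(X,\Gamma)$ is $k$-permutational iff $\mathrm{Ret}(X)$ is $(k-1)$-permutational iff $\cdots$ iff $\mathrm{Ret}^k(X)$ is $0$-permutational. Since being $0$-permutational means $\Omega_0(y)=\Omega_0(z)$, i.e.\ $y=z$ for all $y,z$, this says exactly that $\mathrm{Ret}^k(X)$ has at most one element (and once a retract is a singleton all further retracts are singletons), which is the definition of multipermutation level at most~$k$. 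So everything reduces to the displayed claim.

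To prove the claim I would first peel off the top level of $\Omega_k$. Writing $v=\Omega_{k-1}(\mathfrak{w}_{k-1}\gamma^{(k-1)},\ldots,\mathfrak{w}_1\gamma^{(1)},y,x_1,\ldots,x_{k-1})$ and $v'$ for the same expression with $y$ replaced by $z$, the defining recursion of $\Omega_k$ gives $\Omega_k(\mathfrak{w}_k\gamma^{(k)},\ldots,y,\ldots,x_k)=\mathfrak{w}_k\gamma^{(k)}_{v}(x_k)$ and likewise with $v'$. For fixed lower data, the equation $\mathfrak{w}_k\gamma^{(k)}_{v}(x_k)=\mathfrak{w}_k\gamma^{(k)}_{v'}(x_k)$ quantified over all $\mathfrak{w}_k,\gamma^{(k)},x_k$ is equivalent to $v\sim v'$: taking $\mathfrak{w}_k$ to be the empty word forces $\gamma^{(k)}_{v}(x_k)=\gamma^{(k)}_{v'}(x_k)$ for all $\gamma^{(k)},x_k$, which is exactly $v\sim v'$, and conversely $v\sim v'$ makes the two sides identical. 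Hence $(X,\Gamma)$ is $k$-permutational precisely when $v\sim v'$ holds for all $(k-1)$-level data and all $y,z\in X$. On the other side, the quotient map $\pi\colon X\to\mathrm{Ret}(X)$ is a homomorphism, so a straightforward induction shows that $\pi$ commutes with $\Omega_{k-1}$, the latter computed in $\mathrm{Ret}(X)$ on the $\pi$-images of the data; since $\pi$ is surjective on elements and lifts all words of $\Sigma^*$, the statement that $\mathrm{Ret}(X)$ is $(k-1)$-permutational is equivalent to $\pi(v)=\pi(v')$, i.e.\ $v\approx v'$, for all $(k-1)$-level data and all $y,z$.

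Comparing the two reformulations, the claim becomes: ``$v\sim v'$ for all data'' if and only if ``$v\approx v'$ for all data''. Because $\mathord{\approx}\subseteq\mathord{\sim}$, one implication is immediate. The other is the crux, and here I would use the arbitrariness of the words $\mathfrak{w}_i$ together with the identity $\mathord{\approx}=\mathord{\sim_\infty}$ established above. Assuming $v\sim v'$ (that is, $v\sim_0 v'$) for all data, I prove by induction on $j$ that $v\sim_j v'$ for all data. For the step, fix $\gamma\in\Gamma$ and $x\in X$; since $v=\mathfrak{w}_{k-1}\gamma^{(k-1)}_{u}(x_{k-1})$ for a suitable $u$, we have $\gamma_x(v)=(\gamma_x\mathfrak{w}_{k-1})\gamma^{(k-1)}_{u}(x_{k-1})$, so prepending the single symbol $\gamma_x\in\Sigma$ to the top word $\mathfrak{w}_{k-1}$ exhibits $\gamma_x(v)$ and $\gamma_x(v')$ as the values of $\Omega_{k-1}$ on \emph{the same} lower data with modified top word and base points $y,z$. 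The induction hypothesis applied to this modified data yields $\gamma_x(v)\sim_j\gamma_x(v')$, and together with $v\sim_j v'$ this is exactly $v\sim_{j+1} v'$. Thus $v\sim_j v'$ for every $j$, whence $v\sim_\infty v'$, i.e.\ $v\approx v'$.

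I expect this last step — upgrading $v\sim v'$ to $v\approx v'$ — to be the only genuine obstacle. It is precisely the point at which the words $\mathfrak{w}_i$ are indispensable (without them the modified-word trick collapses, in accordance with Example~\ref{exm:3-perm}), and it is where the algorithmic description $\mathord{\approx}=\mathord{\sim_\infty}$ is invoked. The degenerate case $k=1$ of the bootstrap reduces to the observation that if $\sim$ is the full relation then so is every $\sim_j$, hence $\approx$; all remaining ingredients are routine bookkeeping about how $\pi$ interacts with $\Omega$.
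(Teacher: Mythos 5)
Your proposal is correct and follows essentially the same route as the paper's proof: both peel off the outermost operation to reduce $k$-permutationality to the statement that $\Omega_{k-1}$-values (in the paper, $\Omega_k$-values) with different base points are $\sim$-related, then upgrade $\sim$ to $\approx$ by prepending symbols to the outermost word $\mathfrak{w}$ and climbing the $\sim_j$ hierarchy with $\mathord{\approx}=\mathord{\sim_\infty}$ (your induction on $j$ is the paper's minimal-counterexample argument read in the other direction), and finally conclude by induction/iteration over retracts. The only difference is organizational: you isolate the equivalence ``$(X,\Gamma)$ is $k$-permutational iff $\mathrm{Ret}(X)$ is $(k-1)$-permutational'' as a standalone lemma and chain it down to the singleton case, whereas the paper interleaves the two implications with the multipermutation level inside a single induction on~$k$.
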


\begin{proof}
    We prove the claim by an induction on~$k\geq 0$. The claim is true for $k=0$.

    ``$\Rightarrow$'': Suppose that every $k$-permutational solution has multipermutation level at most~$k$ and suppose that $(X,\Gamma)$ is $(k+1)$-permutational. We want to prove that $\mathrm{Ret}(X)$ is $k$-permutational.

    Consider $\gamma^{(1)},\ldots,\gamma^{(k)}\in\Gamma$,
    $\mathfrak{w}_1,\ldots,\mathfrak{w}_k\in\Sigma^*$ and $x_1,\ldots x_k, y, z \in X$ arbitrary. We have, for all $\delta\in \Gamma$ and $w\in X$, the following:    
    \[
    \Omega_1(\delta,\Omega_k(\mathfrak{w}_k\gamma^{(k)},\ldots,\mathfrak{w}_1\gamma^{(1)},y,x_1,\ldots,x_k),w)=
    \Omega_1(\delta,\Omega_k(\mathfrak{w}_k\gamma^{(k)},\ldots,\mathfrak{w}_1\gamma^{(1)},z,x_1,\ldots,x_k),w).
    \]
    This implies 
    \[
    \Omega_k(\mathfrak{w}_k\gamma^{(k)},\ldots,\mathfrak{w}_1\gamma^{(1)},y,x_1,\ldots,x_k) \sim \Omega_k(\mathfrak{w}_k\gamma^{(k)},\ldots,\mathfrak{w}_1\gamma^{(1)},z,x_1,\ldots,x_k).
    \]
    
    Suppose now, by contradiction,
    that there exist $\alpha^{(1)},\ldots,\alpha^{(k)}\in\Gamma$, 
    $\mathfrak{a}_1,\ldots,\mathfrak{a}_k\in\Sigma^*$ and
    $a_1,\ldots,a_k,\\b, c\in X$ such that, for some $i\in \N$ 
    \[
    \Omega_{k}(\mathfrak{a}_k\alpha^{(k)},\ldots,\mathfrak{a}_1\alpha^{(1)},b,a_1,\ldots,a_k) \not\sim_i \Omega_{k}(\mathfrak{a}_k\alpha^{(k)},\ldots,\mathfrak{a}_1\alpha^{(1)},c,a_1,\ldots,a_k),
    \]
    and suppose that $i$ is minimal. Then there exist $\beta\in\Gamma$ and~$d\in X$ such that
    \begin{align}\label{eq:not}
    \beta_d(
    \Omega_{k}(\mathfrak{a}_k\alpha^{(k)},\ldots,\mathfrak{a}_1\alpha^{(1)},b,a_1,\ldots,a_k)) \not\sim_{i-1} \beta_d(\Omega_k(\mathfrak{a}_k\alpha^{(k)},\ldots,\mathfrak{a}_1\alpha^{(1)},c,a_1,\ldots,a_k)).
    \end{align}
Since for any $x\in X$
\[
    \beta_d(
    \Omega_{k}(\mathfrak{a}_k\alpha^{(k)},\ldots,\mathfrak{a}_1\alpha^{(1)},x,a_1,\ldots,a_k)) =\beta_d\mathfrak{a}_k\alpha^{(k)}_{
    \Omega_{k-1}(\alpha^{(k-1)},\ldots,\mathfrak{a}_1\alpha^{(1)},x,a_1,\ldots,x_{k-1})}(a_k),
    \]  
the equality \eqref{eq:not} means
    \[
    \Omega_{k}(\beta_d\mathfrak{a}_k
    \alpha^{(k)},\ldots,\mathfrak{a}_1\alpha^{(1)},b,a_1,\ldots,a_k) \not\sim_{i-1} \Omega_k(\beta_d\mathfrak{a}_k\alpha^{(k)},\ldots,\mathfrak{a}_1\alpha^{(1)},c,a_1\ldots,a_k,d)
    \]
    a contradiction with the minimality of~$i$.
    Hence \[
    \Omega_k(\mathfrak{w}_k\gamma^{(k)},\ldots,\mathfrak{w}_1\gamma^{(1)},y,x_1,\ldots,x_k) \approx \Omega_k(\mathfrak{w}_k\gamma^{(k)},\ldots,\mathfrak{w}_1\gamma^{(1)},z,x_1,\ldots,x_k),
    \] and the retract is $k$-permutational.
    By the induction hypothesis, the multipermutation level of the retract is at most~$k$. Hence the multipermutation level of $(X,\Gamma)$ is at most $k+1$.

    ``$\Leftarrow$'': Now suppose that every algebra with a set of binary operations and 
of multipermutation level at most $k$ is $k$-permutational. 
    Let us have an algebra $(X,\Gamma)$ of multipermutation level at most~$k+1$.
    Then, for all $\gamma^{(1)},\ldots,\gamma^{(k)}\in\Gamma$, $\mathfrak{w}_1,\ldots,\mathfrak{w}_k\in\Sigma^*$ and $x_1,\ldots,x_k, y, z\in X$ 
    \[
    \Omega_k(\mathfrak{w}_k\gamma^{(k)},\ldots,\mathfrak{w}_1\gamma^{(1)},y,x_1,\ldots,x_k) \approx \Omega_k(\mathfrak{w}_k\gamma^{(k)},\ldots,\mathfrak{w}_1\gamma^{(1)},z,x_1,\ldots,x_k).
    \]
In particular this means that,    
for any $\gamma\in\Gamma$
\[
    \gamma_{\Omega_k(\mathfrak{w}_k\gamma^{(k)},\ldots,\mathfrak{w}_1\gamma^{(1)},y,x_1,\ldots,x_k)}=\gamma_{\Omega_k(\mathfrak{w}_k\gamma^{(k)},\ldots,\mathfrak{w}_1\gamma^{(1)},z,x_1,\ldots,x_k)}.
    \]
Hence, for $\gamma^{k+1}\in\Gamma$, $\mathfrak{w}_{k+1}\in\Sigma^*$ and $x_{k+1}\in X$, we obtain
    \begin{multline*}
        \Omega_{k+1}(\mathfrak{w}_{k+1}\gamma^{(k+1)},\ldots,\mathfrak{w}_1\gamma^{(1)},y,x_1,\ldots,x_{k+1})=\\
        \mathfrak{w}_{k+1}\gamma^{(k+1)}_{\Omega_{k}(\mathfrak{w}_k\gamma^{(k)},\ldots,\mathfrak{w}_1\gamma^{(1)},y,x_1,\ldots,x_k)}(x_{k+1})=\\
       \mathfrak{w}_{k+1}\gamma^{(k+1)}_{\Omega_{k}(\mathfrak{w}_k\gamma^{(k)},\ldots,\mathfrak{w}_1\gamma^{(1)},z,x_1,\ldots,x_k)}(x_{k+1})=\\
        \Omega_{k+1}(\mathfrak{w}_{k+1}\gamma^{(k+1)},\ldots,\mathfrak{w}_1\gamma^{(1)},z,x_1,\ldots,x_{k+1})
    \end{multline*}
    and $(X,\Gamma)$ is $k+1$ permutational.
\end{proof}

Example~\ref{exm:3-perm} works for $n$ at least~$3$ and the reader may wonder why not less. Of course, the word $\mathfrak{w}_1$ is not needed when describing solutions $(X,\sigma,\tau)$ of multipermutation level~$1$ as such solutions satisfy
\[\sigma_{x}=\sigma_y \quad \text{ and }\quad \tau_x=\tau_y,\]
for all $x,y\in X$. The next result says that, in the case of degenerate solutions of multipermutation level~$2$, we can also work without the words $\mathfrak{w}_1$ and $\mathfrak{w}_2$.

\begin{proposition}
        A solution $(X,\sigma,\tau)$ 
    is of multipermutation level at most~$2$ if and only if, for all $x,y,z\in X$:
    \begin{align*}
        \sigma_{\sigma_x(z)}&=\sigma_{\sigma_y(z)}, &
        \tau_{\tau_x(z)}&=\tau_{\tau_y(z)},&
        \sigma_{\tau_x(z)}&=\sigma_{\tau_y(z)}, &
        \tau_{\sigma_x(z)}&=\tau_{\sigma_y(z)}.
    \end{align*}
\end{proposition}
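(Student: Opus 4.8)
The plan is to deduce the statement from Theorem~\ref{thm:main} with $k=2$, which already identifies multipermutation level at most~$2$ with being $2$-permutational; it then suffices to show that, for a solution, the four displayed equations are equivalent to $2$-permutationality. The first thing I would record is that the four equations are nothing but the condition
\[
(\ast)\qquad \sigma_c(d)\sim\sigma_{c'}(d)\quad\text{and}\quad\tau_c(d)\sim\tau_{c'}(d)\qquad\text{for all }c,c',d\in X,
\]
that is, every column is constant up to $\sim$. Indeed, reading the equality $\gamma^{(2)}_{\gamma^{(1)}_x(z)}=\gamma^{(2)}_{\gamma^{(1)}_y(z)}$ over both choices $\gamma^{(2)}\in\{\sigma,\tau\}$ says exactly $\gamma^{(1)}_x(z)\sim\gamma^{(1)}_y(z)$, and the four combinations of $\gamma^{(1)},\gamma^{(2)}$ exhaust the four equations.

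The implication ``$2$-permutational $\Rightarrow(\ast)$'' is immediate: specialising \eqref{eq:permut} to $k=2$ and to empty words $\mathfrak{w}_1=\mathfrak{w}_2=\varepsilon$ yields $\gamma^{(2)}_{\gamma^{(1)}_y(x_1)}(x_2)=\gamma^{(2)}_{\gamma^{(1)}_z(x_1)}(x_2)$ for all arguments, which is $(\ast)$. For the converse I would first streamline \eqref{eq:permut} for $k=2$. Written out it reads $\mathfrak{w}_2\gamma^{(2)}_{\mathfrak{w}_1\gamma^{(1)}_y(x_1)}(x_2)=\mathfrak{w}_2\gamma^{(2)}_{\mathfrak{w}_1\gamma^{(1)}_z(x_1)}(x_2)$. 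Since a word applied to two equal elements produces equal elements, the outer word $\mathfrak{w}_2$ may be taken empty; quantifying then over $x_2\in X$ and over $\gamma^{(2)}\in\{\sigma,\tau\}$ shows that $2$-permutationality is equivalent to
\[
\mathfrak{w}(\gamma_y(x))\sim\mathfrak{w}(\gamma_z(x))\qquad\text{for all }\mathfrak{w}\in\Sigma^*,\ \gamma\in\{\sigma,\tau\},\ x,y,z\in X.
\]
The case $\mathfrak{w}=\varepsilon$ is exactly $(\ast)$, so it remains to propagate $(\ast)$ along words.

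I would carry this out by induction on the length of $\mathfrak{w}$, the base case being $(\ast)$. In the inductive step one must show that applying one more operation $\delta_a$ preserves the relevant $\sim$-relation, i.e. that $\delta_a(P)\sim\delta_a(Q)$ when the pair $(P,Q)$ is obtained from a column by a shorter word. Here the braid identities \eqref{birack:1}--\eqref{birack:3} are essential: rewriting for instance $\sigma_a(\sigma_c(d))$ by \eqref{birack:1} in the form $\sigma_a\sigma_c=\sigma_{\sigma_a(c)}\sigma_{\tau_c(a)}$ turns an operation that acts in the \emph{second} (acted-on) slot into one whose \emph{index} carries the varying parameter~$c$, where $(\ast)$ applies, while the inner factor $\sigma_{\tau_c(a)}(d)$ is itself a column instance of $(\ast)$; the remaining three sub-cases are handled analogously using \eqref{birack:2} and \eqref{birack:3}.

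The main obstacle is exactly this inductive step. One cannot shortcut it by claiming that $\sim$ is a congruence and simply pushing operations through: Example~\ref{ex:bad} exhibits a level-$2$ solution whose relation $\sim$ is \emph{not} a congruence, so $\sim$ is not compatible in the acted-on variable for arbitrary pairs. What rescues the argument is that the pairs generated from columns are of a restricted shape, and the braid relations systematically convert every acted-on occurrence of the varying parameter into an index occurrence governed by $(\ast)$. The delicate part is the bookkeeping: one must organise the rewriting and the transitivity steps so that every pair arising along the way is again of column type (to which $(\ast)$ is applicable) and so that the induction genuinely closes rather than leaking a level of the approximating relations $\sim_i$. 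This organisation, rather than any individual computation, is where the real work lies; the contrast with Example~\ref{exm:3-perm} explains why no such reduction is possible above level~$2$.
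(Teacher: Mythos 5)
Your skeleton is the paper's own: one direction by specialising Theorem~\ref{thm:main} to empty words, and for the converse, translating the four equations into $\gamma_x(z)\sim\gamma_y(z)$ for all $\gamma\in\Gamma$, $x,y,z\in X$, reducing $2$-permutationality to the statement that this column-relation propagates along arbitrary words of $\Sigma^*$, and proving that statement by induction using the braid relations. Your reduction (drop $\mathfrak{w}_2$, quantify over $\gamma^{(2)}$ and $x_2$) is exactly how the paper concludes, and your induction statement (only the innermost index varies) differs only inessentially from the paper's slightly stronger one, in which all indices $x_1,\dots,x_k$ may vary at once: $\gamma^{(1)}_{x_1}\cdots\gamma^{(k)}_{x_k}(z)\sim\gamma^{(1)}_{y_1}\cdots\gamma^{(k)}_{y_k}(z)$.

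The genuine gap is that you never execute the inductive step, and the single case you sketch is the only easy one, so ``the remaining three sub-cases are handled analogously'' is not true as stated. Identities \eqref{birack:1} and \eqref{birack:3} are operator identities, $\sigma_a\sigma_c=\sigma_{\sigma_a(c)}\sigma_{\tau_c(a)}$ and $\tau_a\tau_c=\tau_{\tau_a(c)}\tau_{\sigma_c(a)}$, applicable to any composition; hence the $\sigma\sigma$ and $\tau\tau$ cases go through as you describe. But \eqref{birack:2} equates $\tau_{\sigma_{\tau_y(x)}(z)}\sigma_x(y)$ with $\sigma_{\tau_{\sigma_y(z)}(x)}\tau_z(y)$, so to rewrite a composition $\tau_a\sigma_b(z)$ (or dually $\sigma_a\tau_b(z)$) one needs the outer index $a$ to have the special form $\sigma_{\tau_z(b)}(c)$, i.e.\ to lie in the image of $\sigma_{\tau_z(b)}$ --- and for a degenerate solution this map need not be surjective, so no such $c$ need exist. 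This is exactly where degeneracy bites, and the paper supplies the missing move: before invoking \eqref{birack:2}, use the induction hypothesis itself to replace the index $a$ by $\sigma_{\tau_z(b)}(c)$ for an \emph{arbitrary} $c\in X$ (a legitimate index change, hence covered by the hypothesis), then apply \eqref{birack:2}, change the varying parameter inside the resulting index by the hypothesis again, apply \eqref{birack:2} backwards, and finally undo the initial substitution. Without this auxiliary-element trick your induction does not close in the two mixed cases, which constitute half of the proof. (Your worry about ``leaking a level of $\sim_i$'' is, by contrast, unfounded: the approximating relations never appear in this argument; once the word statement is proved, $2$-permutationality follows and Theorem~\ref{thm:main} absorbs all congruence-theoretic work.)
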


\begin{proof}
    One direction follows trivially from Theorem~\ref{thm:main}, let us hence prove the other one. The assmuption can be translated as $\gamma_x(z)\sim\gamma_y(z)$, for all $\gamma\in\Gamma$ and $x,y,z\in X$. We shall now prove, by an induction on $k\in\N$, that, for all $\gamma^{(1)},\ldots,\gamma^{(k)}\in \Gamma$, $x_1,\ldots,x_k$, $y_1,\ldots,y_k$ and $z\in X$,
    \begin{equation}
        \gamma^{(1)}_{x_1}\gamma^{(2)}_{x_2}\cdots\gamma^{(k)}_{x_k}(z)\sim
        \gamma^{(1)}_{y_1}\gamma^{(2)}_{y_2}\cdots\gamma^{(k)}_{y_k}(z).
    \end{equation}
    The claim is assumed to be true for $k=1$, suppose hence $k>1$.

    \paragraph{Case 1:} Suppose $\gamma^{(k)}=\sigma$ and $\gamma^{(k+1)}=\sigma$. Then
    \begin{multline*}
        \gamma^{(1)}_{x_1}\cdots\gamma^{(k+1)}_{x_{k+1}}(z)=
        \gamma^{(1)}_{x_1}\cdots\gamma^{(k-1)}_{x_{k-1}}\sigma_{x_k}\sigma_{x_{k+1}}(z)\stackrel{\eqref{birack:1}}=
        \gamma^{(1)}_{x_1}\cdots\gamma^{(k-1)}_{x_{k-1}}\sigma_{\sigma_{x_k}(x_{k+1})}\sigma_{\tau_{x_{k+1}}(x_k)}(z)=\\  
        \gamma^{(1)}_{x_1}\cdots\gamma^{(k-1)}_{x_{k-1}}\sigma_{\sigma_{x_k}(x_{k+1})}(\sigma_{\tau_{y_{k+1}}(x_k)}(z))\sim
        \gamma^{(1)}_{x_1}\cdots\gamma^{(k-1)}_{x_{k-1}}\sigma_{\sigma_{x_k}(y_{k+1})}(\sigma_{\tau_{y_{k+1}}(x_k)}(z))=\\
        \gamma^{(1)}_{x_1}\cdots\gamma^{(k-1)}_{x_{k-1}}\sigma_{x_k}\sigma_{y_{k+1}}(z)\sim
        \gamma^{(1)}_{y_1}\cdots\gamma^{(k-1)}_{y_{k-1}}\sigma_{y_k}\sigma_{y_{k+1}}(z)
        =\gamma^{(1)}_{y_1}\cdots\gamma^{(k+1)}_{y_{k+1}}(z).
    \end{multline*}

    \paragraph{Case 2:} Suppose $\gamma^{(k)}=\tau$ and $\gamma^{(k+1)}=\sigma$. Then
    \begin{multline*}
        \gamma^{(1)}_{x_1}\cdots\gamma^{(k+1)}_{x_{k+1}}(z)=
        \gamma^{(1)}_{x_1}\cdots\gamma^{(k-1)}_{x_{k-1}}\tau_{x_k}\sigma_{x_{k+1}}(z)\sim
        \gamma^{(1)}_{x_1}\cdots\gamma^{(k-1)}_{x_{k-1}}\tau_{\sigma_{\tau_z(x_{k+1})}(c)}\sigma_{x_{k+1}}(z)\stackrel{\eqref{birack:2}}=\\
        \gamma^{(1)}_{x_1}\cdots\gamma^{(k-1)}_{x_{k-1}}
        \sigma_{\tau_{\sigma_z(c)}(x_{k+1})}\tau_{c}(z)\sim
        \gamma^{(1)}_{x_1}\cdots\gamma^{(k-1)}_{x_{k-1}}
        \sigma_{\tau_{\sigma_z(c)}(y_{k+1})}\tau_{c}(z)\stackrel{\eqref{birack:2}}=\\
        \gamma^{(1)}_{x_1}\cdots\gamma^{(k-1)}_{x_{k-1}}\tau_{\sigma_{\tau_z(y_{k+1}})}(c)\sigma_{y_{k+1}}(z)\sim
        \gamma^{(1)}_{y_1}\cdots\gamma^{(k-1)}_{y_{k-1}}\tau_{y_k}\sigma_{y_{k+1}}(z)=
        \gamma^{(1)}_{y_1}\cdots\gamma^{(k+1)}_{y_{k+1}}(z),
    \end{multline*}
    for any $c\in X$. Case 3 and Case 4 are analogous.
    
    Let us now have $\gamma,\delta\in\Gamma$, $\mathfrak{v},\mathfrak{w}\in\Sigma^*$ and $x,y,z,t\in X$. Then
    \[
    \mathfrak{v}\gamma_{\mathfrak{w}\delta_x(z)}(t)=
    \mathfrak{v}\gamma_{\mathfrak{w}\delta_y(z)}(t).
    \]
    simply because $\mathfrak{w}\delta_x(z)\sim\mathfrak{w}\delta_y(z)$.
    Hence, according to Theorem~\ref{thm:main}, the solution is of multipermutation level~$2$.
    \end{proof}

\end{document}